\newtheorem{condition}{Condition}
\newcommand\figref[1]{{Figure \ref{fig:#1}}}
\newcommand\tabref[1]{{Table \ref{tab:#1}}}
\begin{document}
\fancyhead{}
\title{Theoretical Analysis on the Efficiency of Interleaved Comparisons}
%
%
\author{Kojiro Iizuka\inst{1,2}\orcidID{0000-0002-3784-2768} \and
Hajime Morita\inst{1}\orcidID{0000-0003-1041-1914} \and
Makoto P. Kato\inst{2}\orcidID{0000-0002-9351-0901}}
%
%
\institute{
Gunosy Inc., Japan \\
\email{iizuka.kojiro@gmail.com}\\
\email{hajime.morita@gunosy.com}\\
\and
University of Tsukuba, Japan \\
\email{mpkato@acm.org}
}

\maketitle              
%
\begin{abstract}
This study presents a theoretical analysis on the efficiency of {\it interleaving},
an efficient online evaluation method for rankings.
Although interleaving has already been applied to production systems,
the source of its high efficiency has not been clarified in the literature.
Therefore, this study presents a theoretical analysis on the efficiency of interleaving methods.
We begin by designing a simple interleaving method similar to ordinary interleaving methods.
Then, we explore a condition under which the interleaving method is more efficient than A/B testing and find that this is the case when users leave the ranking depending on the item's relevance, a typical assumption made in click models.
Finally, we perform experiments based on numerical analysis and user simulation, demonstrating that the theoretical results are consistent with the empirical results.

\keywords{Interleaving \and Online evaluation \and A/B testing.}
\end{abstract}

\section{Introduction}
Online evaluation is one of the most important methods for information retrieval and recommender systems~\cite{preexperiment}.
In particular, A/B tests are widely conducted in web services~\cite{grbovic2018airbnb,okura2017yahoo}.
Although A/B testing is easy to implement, it may negatively impact service experiences if the test lasts a long time and the alternative system (B) is not as effective as the current system (A).
In recent years, interleaving methods have been developed to mitigate the negative impacts of online evaluation, as interleaving is experimentally known to be much more efficient than A/B testing~\cite{chapelle2012large,radlinski2008does}.

Although interleaving methods have already been applied to production systems, 
the source of their high efficiency has not been sufficiently studied.
Hence, this study presents a theoretical analysis of the efficiency of interleaving methods.
A precise understanding of what makes interleaving efficient will enable us to distinguish properly the conditions under which interleaving should and should not be employed.
Furthermore, the theoretical analysis could lead to the further development of interleaving methods.

The following example intuitively explains the efficiency of interleaving.
Let us consider the case of evaluating the rankings $A$ and $B$,
where $A$ has higher user satisfaction than $B$.
Interleaving presents an interleaved ranking of $A$ and $B$ to the user
and evaluates these rankings based on the number of clicks given to the items from each ranking.
For example, $A$ is considered superior to $B$ if more items from $A$ in the interleaved ranking are clicked than those from $B$.
In the interleaved ranking, an item from $A$ can be placed ahead of one from $B$ and vice versa.
If the user clicks on an item and is satisfied with the content,
the user may leave the ranking and not click on the other items.
Because we assume items from ranking $A$ are more satisfactory than those from ranking $B$,
items from $B$ have a lesser chance of being clicked when ranking $B$ alone is presented.
Thus, interleaving enables a superior ranking to overtake implicitly the click opportunities of the other ranking, making it easier to observe the difference in the ranking effects on users' click behaviors.

Following this intuition, this study analyzes the efficiency of interleaving from a theoretical perspective.
We begin by designing a simple interleaving method to generalize representative interleaving methods.
Then, we probabilistically model our interleaving method by decomposing an item click into an examination probability and the item's relevance.
By analyzing the model, we show that interleaving has a lower evaluation error rate than A/B testing in cases when users leave the ranking depending on the item's relevance.

We conduct experiments to validate the theoretical analysis, the results of which confirm that the efficiency of interleaving is superior to that of A/B testing when users leave the ranking based on the item's relevance.
The results are consistent with those of the theoretical analysis. 
Therefore, this study theoretically and empirically verifies the source of the efficiency of interleaving.

The contributions of this study can be summarized as follows:
\begin{itemize}
    \item We discuss the nature of the efficiency of interleaving, focusing on the theoretical aspects on which the literature is limited.
    \item We identify the condition under which interleaving is more efficient than A/B testing according to theoretical analysis.
    \item  Our experiments confirm that the theoretical analysis results are consistent with the empirical results.
\end{itemize}

The structure of this paper is as follows.
Section~\ref{relatedworks} describes the related research.
Section~\ref{preliminary} introduces the notations and other preparations.
Section~\ref{analysis} analyzes the efficiency of interleaving.
Section~\ref{numerical} and Section~\ref{experiment} describe the numerical analysis and the user simulation experiments.
Finally, Section~\ref{conclusion} summarizes the study.

\section{Related Works}
\label{relatedworks}

\subsection{User Click Behavior}
Among the various types of user feedback, modeling of click behavior has been extensively studied.
The user behavior model concerning clicks in the information retrieval area is called the \textit{click model}, and it helps simulate user behaviors in the absence of real users or when experiments with real users would interfere with the user experience.
The term click model was first used in the context of the cascade click model introduced by Craswell et al.~\cite{craswell2008experimental}.
Thereafter, basic click models~\cite{chapelle2009dynamic,dupret2008user,guo2009efficient} were developed and are still being improved in different ways.
Our study uses \textit{examination} and \textit{relevance}, as defined in the click model, to model the interleaving method.
The examination represents a user behavior in that the user examines an item on a ranking and relevance concerns how well an item meets the information needs of the user.

\subsection{Online Evaluation}
Online evaluation is a typical method used to perform ranking evaluations.
Among the various methods of online evaluation, A/B testing is widely used and is easy to implement.
It divides users into groups A and B and presents the ranking A and ranking B to each group of users.
As an extension of A/B testing, improved methods exist that reduce the variance of the evaluation~\cite{preexperiment,oosterhuis2020taking,treemodel,casenetflix}.

Another method of online evaluation is interleaving~\cite{brost2016improved,hofmann2011probabilistic,iizuka2021decomposition,joachims2002optimizing,kharitonov2013using,kharitonov2015generalized,radlinski2008does}, which involves evaluating two rankings, whereas multileaving involves evaluating three or more rankings~\cite{brost2016improved,oosterhuis2017ppm}.
In particular, Pairwise Preference Multileaving~(PPM)~\cite{oosterhuis2017ppm} is a multileaving method that has a theoretical guarantee about \textit{fidelity}~\cite{hofmann2013fidelity} that is related to this study.
Some studies suggest that interleaving methods are experimentally known to be 10 to 100 times more efficient than A/B testing~\cite{chapelle2012large,radlinski2008does}.
However, the discussion concerning from where this efficiency comes is limited.
Related to the analysis in this study, Oosterhuis and de Rijke showed that some interleaving methods can cause systematic errors, including bias when compared with the ground truth~\cite{oosterhuis2020taking}.
In this study, we present the cases in which the examination probability of interleaved rankings affects the efficiency of the interleaving method.

\section{Preliminary}
\label{preliminary}
In this section, we first introduce a simple interleaving method for analysis.
We then define the notations to model A/B testing and the interleaving method.
Finally, we define the efficiency.

\subsection{Interleaving Method for Analysis~(IMA)}



Because theoretically analyzing existing interleaving methods is difficult, 
we introduce an interleaving method for analysis~(IMA), which is a simplified version of existing interleaving methods.
In the remainder of this paper, the term \textit{interleaving} will refer to this IMA.
Our method performs for each round $l$ as follows.
Items are probabilistically added according to the flip of a coin.
If the coin is face up, the $l$-th item in ranking $A$ is used as the $l$-th item in the interleaved ranking $I$.
If the coin is face down, the $l$-th item in ranking $B$ is used as the $l$-th item in the interleaved ranking $I$.
A set of items selected from $A$ (or $B$) is denoted by $TeamA$ (or $TeamB$), and this continues until the length of the interleaved ranking reaches the required length, which we assumed to be same for the input ranking length.
In this study, we also assume that no duplication items exist in the ranking $A$ and $B$ to simplify the discussion.

We use almost the same scoring as that in Team Draft Interleaving~(TDI)~\cite{radlinski2008does} to derive a preference between $A$ and $B$ from the observed clicking behavior in $I$.
We let $c_j$ denote the rank of the $j$-th click in the interleaved ranking $I = (i_1, i_2, \ldots)$.
Then, we attribute the clicks to ranking $A$ or $B$ based on which team generated the clicked result.
In particular, for the $u$-th impression, we obtain the score for IMA as follows:
\begin{eqnarray}
\label{imascoring}
score_{I,A}^u &= | \{c_j: i_{c_j} \in TeamA_u\} | / |I| \\ \nonumber
\textrm{~~~and~~~} score_{I,B}^u &= | \{c_j: i_{c_j} \in TeamB_u\} | / |I|,
\end{eqnarray}
where $TeamA_u$ and $TeamB_u$ denote the team attribution of ranking $A$ and ranking $B$, respectively, at the $u$-th impressions and $|I|$ is the length of the interleaved ranking.
Note that the scores $score_{I,A}^u$ and $score_{I,B}^u$ represent the expected values of the click per item in the interleaved ranking, as the scores are divided by the ranking length.
At the end of the evaluation, we score IMA using the formulas 
$score_{I,A}=\sum_{u=1}^n score_{I,A}^u/n$ and $score_{I,B}=\sum_{u=1}^n score_{I,B}^u/n$,
where $n$ is the total number of impressions.
We infer a preference for $A$ if $score_{I,A} > score_{I,B}$, a preference $B$ if $score_{I,A} < score_{I,B}$, and no preference if $score_{I,A} = score_{I,B}$.

An interleaved comparison method is {\it biased}
if, under a random distribution of clicks, it prefers one ranker over another in expectation~\cite{hofmann2011probabilistic}.
Some existing interleaving methods were designed so that the interleaved ranking would not be biased.
For example, Probabilistic Interleaving~(PI)~\cite{hofmann2011probabilistic}, TDI~\cite{radlinski2008does}, and Optimized Interleaving~(OI)~\cite{radlinski2013optimized} were proven unbiased~\cite{oosterhuis2017ppm,hofmann2011probabilistic}.
Our IMA is also unbiased, because every ranker is equally likely to add an item at each location in the ranking; in other words, the interleaved ranking is generated so that the examination probabilities for the input rankings will all be the same.

\subsection{A/B Testing}
In this paper, we re-define an A/B testing method for compatibility with the IMA.
We use Equation (\ref{imascoring}) for each $u$-th impression to score A/B testing, denoted as $score_{AB,A}^u$ and $score_{AB,B}^u$.
At the end of the evaluation, we score A/B testing using 
$score_{AB,A}=\sum_{u=1}^n score_{AB,A}^u/n$ and $score_{AB,B}=\sum_{u=1}^n score_{AB,B}^u/n$,
where $n$ is the total number of impressions, the same as in IMA.
The difference between A/B testing and the IMA is the policy for generating rankings.
All items in the interleaved ranking are selected from either ranking $A$ only or ranking $B$ only in the A/B testing.
In other words, the probability of selecting $A$ or $B$ is fixed to 1 when generating a single interleaved ranking,
or either ranking $A$ or $B$ is presented to the user at an equal probability. 

\subsection{Notation}
First, we introduce the notations of random variables. 
We denote random variables for clicks as $Y \in \{0,1\}$, the examination of the item on the ranking as $O \in \{0,1\}$, and the relevance of the item as $R \in \{0,1\}$.
The value $O=1$ means a user examines the item on the ranking.
This study assumes $Y=O \cdot R$, 
meaning the user clicks on an item only if the item is examined and relevant.
We denote the probability of some random variable $T$ as $P(T)$, the expected value of $T$ as $E(T)$, and the variance in $T$ as $V(T)$.
In addition, the expected value of and variance in the sample mean for $T_i$ of each $i$-th impression over $n$ times are denoted as $E(\bar{T}^n)$ and $V(\bar{T}^n)$, respectively.

Next, we introduce the notations of random variables for the rankings.
In this study, we evaluate two rankings, $A$ and $B$, and we do not distinguish between ranking and items to simplify the notations.
We denote the random variable of clicks in $TeamA$ as $Y_A$ and the random variable of relevance for ranking $A$ as $R_A$.
$O_{AB, A}$ and $O_{I, A}$ are defined as the random variables of the examination of A/B testing and of interleaving, respectively.
Ranking $A$ and ranking $B$ are interchangeable in the above notations.
We also use $\bullet$ to denote ranking $A$ or $B$.
For example, $O_{I, \bullet}$ refers to the random variable for the examination of interleaving when ranking $A$ or $B$.

\subsubsection{\bf Probabilistic Models.}
This study assumes the probabilistic models of A/B testing and interleaving as follows:
\begin{itemize}
\item $Y_{AB,A} = S_{AB,A} \cdot O_{AB,A} \cdot R_A$ holds where
 a random variable for ranking assignment in A/B testing is denoted as $S_{AB,A} \in \{0,1\}$, where $S_{AB,A}=1$ if ranking $A$ is selected via A/B testing. We note that $E(S_{AB,A})=1/2$, as ranking $A$ or $B$ is randomly selected.
\item $Y_{I,A}= S_{I,A} \cdot O_{I,A} \cdot R_A$ holds where
 a random variable for ranking assignment in interleaving is denoted as $S_{I,A} \in \{0,1\}$, where $S_{I,A}=1$ if the item belongs to $TeamA$. We note that $E(S_{I,A})=1/2$, as the item is selected randomly at each position in $I$ from ranking $A$ or $B$.
\end{itemize}

We assume $Y^i_{AB,A}$, a random variable for clicks in the $i$-th impression, follows the Bernoulli distribution $\text{B}(p_{AB,A})$, where $p_{AB,A}=E(Y_{AB,A})$.
Then, $\bar{Y}^n_{AB,A}$ is defined as $\bar{Y}^n_{AB,A}=\frac{1}{n}\sum_{i=1}^n Y^i_{AB,A}$.
In this definition, $\bar{Y}^n_{AB,A}$ follows a binomial distribution, and $\bar{Y}^n_{AB,A}$ can be considered as a random variable that follows normal distribution when $n \rightarrow \infty$.
We note that 
\begin{equation}
E(\bar{Y}^n_{AB,A})=E(\frac{1}{n}\sum_{i=1}^n Y^i_{AB,A})=\frac{1}{n} \sum_{i=1}^n p_{AB,A}
=E(Y_{AB,A}), \label{eq:expectation}
\end{equation}
\begin{equation}
V(\bar{Y}^n_{AB,A}) = V(\frac{1}{n}\sum_{i=1}^n Y^i_{AB,A}) = \frac{1}{n^2} \sum_{i=1}^n p_{AB,A}(1-p_{AB,A}) =\frac{1}{n} V(Y_{AB,A}) \label{eq:variance}
\end{equation}
holds.
The same holds for $\bar{Y}^n_{AB,B}$, $\bar{Y}^n_{I,A}$, and $\bar{Y}^n_{I,B}$. 
In addition,
\begin{equation}
    E(Y_{AB,A}) 
    =E(S_{AB,A}) E(O_A \cdot R_A)
    =E(S_{AB,A}) E(Y_A) \label{eq:abexpect}
\end{equation}
holds, as $Y_{AB,A} = S_{AB,A} \cdot O_{AB,A} \cdot R_A = S_{AB,A} \cdot O_A \cdot R_A$, and $S_{AB,A}$ is independent of $O_A$ and $R_A$.
Finally, note that $E(\bar{Y}^n_{AB,A})=score_{AB,A}$ holds from the definition.
In the above equation, ranking $A$ and ranking $B$ are interchangeable.


\subsection{Definition of Efficiency}
\label{defefficiency}
In this study, efficiency is defined as the level of evaluation error probability, denoted as $P(Error)$.
The efficiency reflects that the error probability is small, given the statistical parameters.
We demonstrate below that interleaving is more efficient; $P(Error_{I}) < P(Error_{AB})$ holds with some conditions.
More formally, the error probability for A/B testing can be defined as follows:
if $E(Y_A) - E(Y_B) > 0$ and ${\bar Y}^n_{AB,\bullet} \sim \mathcal{N}(E({\bar Y}^n_{AB,\bullet}), V({\bar Y}^n_{AB,\bullet}))$ for $n$ is sufficiently large, then:
\begin{align*}
    & P(Error_{AB}) \\ 
    =~& P({\bar Y}^n_{AB,A} - {\bar Y}^n_{AB,B} \le 0) \\
    =~& \int_{-\infty}^{0}\mathcal{N}(x|E({\bar Y}^n_{AB,A}) - E({\bar Y}^n_{AB,B}), V({\bar Y}^n_{AB,A}) + V({\bar Y}^n_{AB,B}))dx  \\
    =~& \int_{-\infty}^{-(E({\bar Y}^n_{AB,A}) - E({\bar Y}^n_{AB,B}))}\mathcal{N}(x|0,V({\bar Y}^n_{AB,A}) + V({\bar Y}^n_{AB,B}))dx. \\
\end{align*}
The second line to the third line uses the reproductive property of a normal distribution.
In the above equation, ranking $A$ and ranking $B$ are interchangeable, and the error probability of interleaving $P(Error_{I})$ is also given in the same way.

\section{Theoretical Analysis}
\label{analysis}
This section discusses the theoretical efficiency of interleaving.
From the definition of efficiency given in Section~\ref{defefficiency}, we see that the A/B testing and interleaving error probability rates depend on the sum of the variances and the difference between the expected click values.
In particular, the smaller the sum of the variances, the smaller the error probability, and the larger the difference between the expected click values, the smaller the error probability.
We investigate the relationship between the variance and difference in the expected click values in the following two cases: when the examination probability is constant and when it is relevance-aware.

\subsubsection{\textbf{Case of Constant Examination.}}
The case of constant examination means the examination probability is constant for the relevance; in other words, the examination probability does not depend on the relevance of a ranking.
For example, a position-based click model~\cite{chuklin2015click} in which the examination probability depends only on the item's position in the ranking is a constant case.
Another example is a perfect click model in the cascade click model used in Section~\ref{experiment}.

We show that interleaving has the same efficiency as A/B testing for the case of constant examination based on two theorems under the following conditions.
\begin{condition}
\label{c0}
$E(O_{AB,A})=E(O_{I,A})=E(O_{AB,B})=E(O_{I,B})=c$: The expected value of the examination is constant and the same between A/B testing and interleaving.
\end{condition}

\begin{condition}
\label{c3}
$O \perp R$: The random variable of the examination $O$ and the random variable of the relevance $R$ are independent from each other.
\end{condition}




\begin{theorem}
\label{staticprob}
If conditions \ref{c0} and \ref{c3} are satisfied, $E({\bar Y}^n_{AB,A}) - E({\bar Y}^n_{AB,B})=E({\bar Y}^n_{I,A}) - E({\bar Y}^n_{I,B})$ holds.
\end{theorem}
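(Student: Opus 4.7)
The plan is to reduce the claim to a direct computation of each expectation $E(\bar Y^n_{\bullet,\bullet})$ using the two given conditions, since by equation~(\ref{eq:expectation}) we already know $E(\bar Y^n_{\bullet,\bullet}) = E(Y_{\bullet,\bullet})$, so the time‑average on the left side of the stated equality may be dropped and we only need to prove $E(Y_{AB,A}) - E(Y_{AB,B}) = E(Y_{I,A}) - E(Y_{I,B})$.

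First I would expand each of the four quantities using the probabilistic models $Y_{AB,\bullet} = S_{AB,\bullet}\cdot O_{AB,\bullet}\cdot R_\bullet$ and $Y_{I,\bullet} = S_{I,\bullet}\cdot O_{I,\bullet}\cdot R_\bullet$. Arguing exactly as in equation~(\ref{eq:abexpect}), the ranking‑assignment variable $S$ is independent of both $O$ and $R$ and has expectation $1/2$, so each expectation splits as
\begin{equation*}
E(Y_{AB,A}) = \tfrac{1}{2}\, E(O_{AB,A}\cdot R_A), \qquad E(Y_{I,A}) = \tfrac{1}{2}\, E(O_{I,A}\cdot R_A),
\end{equation*}
and analogously for ranking $B$. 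Next, Condition~\ref{c3} ($O \perp R$) lets me factor each of these bilinear expectations into a product of marginals, $E(O\cdot R) = E(O)\,E(R)$. Finally, Condition~\ref{c0} identifies all four examination expectations with a common constant $c$, yielding
\begin{equation*}
E(Y_{AB,A}) = \tfrac{c}{2} E(R_A), \quad E(Y_{AB,B}) = \tfrac{c}{2} E(R_B), \quad E(Y_{I,A}) = \tfrac{c}{2} E(R_A), \quad E(Y_{I,B}) = \tfrac{c}{2} E(R_B).
\end{equation*}
Taking differences gives $E(Y_{AB,A}) - E(Y_{AB,B}) = \tfrac{c}{2}(E(R_A) - E(R_B)) = E(Y_{I,A}) - E(Y_{I,B})$, and reattaching the $\bar{(\cdot)}^n$ via~(\ref{eq:expectation}) finishes the proof.

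There is no real obstacle here: the theorem is essentially a bookkeeping consequence of the two conditions combined with the multiplicative decomposition of $Y$ and the independence of $S$ from $(O,R)$. The only subtlety worth flagging explicitly in the write‑up is to justify that Condition~\ref{c3}, stated as $O \perp R$ without subscripts, applies uniformly to the pairs $(O_{AB,A},R_A)$, $(O_{AB,B},R_B)$, $(O_{I,A},R_A)$, and $(O_{I,B},R_B)$, so that the factorization $E(O\cdot R)=E(O)E(R)$ is legitimate in all four cases. Once that is noted, the computation proceeds in a single line.
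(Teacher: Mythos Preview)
Your proposal is correct and follows essentially the same approach as the paper: both arguments reduce to the single-impression expectations via equation~(\ref{eq:expectation}), split off the factor $E(S_{\cdot,\cdot})=\tfrac12$, use Condition~\ref{c3} to factor $E(O\cdot R)=E(O)E(R)$, and then invoke Condition~\ref{c0} to identify the examination expectations. The paper shows $E(Y_{AB,A})=E(Y_{I,A})$ (and similarly for $B$) by a direct chain of equalities, while you compute each term explicitly as $\tfrac{c}{2}E(R_\bullet)$; this is a cosmetic difference only.
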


\begin{proof}
When $E(O_{AB,A})=E(O_{I,A})$ from condition~\ref{c0}, 
\begin{equation}
E(Y_{AB,A})=E(Y_{I,A})
\label{eqalabandi}
\end{equation}
holds because
$E(Y_{AB,A}) = E(S_{AB,A} \cdot O_{AB,A} \cdot R_A) = E(S_{AB,A}) E(O_{AB,A}) E(R_A) = E(S_{I,A}) E(O_{I,A}) E(R_A) = E(Y_{I,A})$ from condition~\ref{c3} and  $E(S_{AB,A})= E(S_{I,A})=\frac{1}{2}$.
The same holds for ranking $B$, as $E(Y_{AB,B})=E(Y_{I,B})$.
Thus, $E(Y_{AB,A}) - E(Y_{AB,B}) = E(Y_{I,A}) - E(Y_{I,B})$. 
Therefore, $E({\bar Y}^n_{AB,A}) - E({\bar Y}^n_{AB,B})=E({\bar Y}^n_{I,A}) - E({\bar Y}^n_{I,B})$
holds, as $E({\bar Y}^n_{AB,A})=E(Y_{AB,A})$, $E({\bar Y}^n_{AB,B})=E(Y_{AB,B})$, $E({\bar Y}^n_{I,A})=E(Y_{I,A})$, and $E({\bar Y}^n_{I,B})=E(Y_{I,B})$ from Equation \eqref{eq:expectation}.

\end{proof}

\begin{theorem}
\label{staticvariance}
If conditions \ref{c0} and \ref{c3} are satisfied, $V({\bar Y}^n_{AB,A}) + V({\bar Y}^n_{AB,B}) = V({\bar Y}^n_{I,A}) + V({\bar Y}^n_{I,B})$ holds.
\end{theorem}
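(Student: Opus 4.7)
The plan is to reduce the statement about sample-mean variances to one about single-impression variances, and then exploit the fact that each $Y_{\bullet,\bullet}$ is Bernoulli. From Equation~\eqref{eq:variance} (applied to each of the four random variables $Y_{AB,A}$, $Y_{AB,B}$, $Y_{I,A}$, $Y_{I,B}$), we have $V(\bar Y^n_{\bullet,\bullet})=\frac{1}{n}V(Y_{\bullet,\bullet})$, so the claim is equivalent to
\[
V(Y_{AB,A})+V(Y_{AB,B})=V(Y_{I,A})+V(Y_{I,B}).
\]

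The key observation I would use is that each $Y_{\bullet,\bullet}$ is a product of indicator variables $S\cdot O\cdot R$, hence itself $\{0,1\}$-valued, i.e.\ Bernoulli. Therefore $V(Y_{\bullet,\bullet})=E(Y_{\bullet,\bullet})\bigl(1-E(Y_{\bullet,\bullet})\bigr)$. This collapses the variance identity into an identity purely about expectations.

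Next, I would invoke the intermediate identity already established in the proof of Theorem~\ref{staticprob}, namely Equation~\eqref{eqalabandi}: $E(Y_{AB,A})=E(Y_{I,A})$, and symmetrically $E(Y_{AB,B})=E(Y_{I,B})$, both of which follow from Conditions~\ref{c0} and~\ref{c3} together with $E(S_{AB,A})=E(S_{I,A})=\tfrac{1}{2}$. Applying $p\mapsto p(1-p)$ to each equality yields $V(Y_{AB,A})=V(Y_{I,A})$ and $V(Y_{AB,B})=V(Y_{I,B})$ term-by-term (a strictly stronger statement than what is claimed), and the sum identity follows immediately. Dividing by $n$ gives the sample-mean version.

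Honestly, there is no real obstacle here: the only thing to be careful about is justifying that $Y_{\bullet,\bullet}$ is a genuine Bernoulli variable, which is immediate from the modeling assumption $Y=O\cdot R$ extended by the team-assignment indicator $S\in\{0,1\}$. No independence of $Y^n_{\bullet,A}$ and $Y^n_{\bullet,B}$ across rankings is needed, since we are summing variances of separate quantities, not taking the variance of a difference. The proof is essentially a one-line consequence of Theorem~\ref{staticprob} combined with the Bernoulli variance formula.
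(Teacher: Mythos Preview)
Your proposal is correct and matches the paper's own proof essentially step for step: the paper also invokes Equation~\eqref{eq:variance} to write each $V(\bar Y^n_{\bullet,\bullet})$ as $\frac{1}{n}E(Y_{\bullet,\bullet})(1-E(Y_{\bullet,\bullet}))$, substitutes the expectation equalities from Equation~\eqref{eqalabandi}, and obtains the term-by-term variance equalities before summing. There is no substantive difference between your argument and the paper's.
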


\begin{proof}
From Equations \eqref{eq:variance} and \eqref{eqalabandi},
$V(\bar{Y}^n_{AB,A})=\frac{1}{n}E(Y_{AB,A})(1-E(Y_{AB,A})) =\frac{1}{n}E(Y_{I,A})(1-E(Y_{I,A})) =V(\bar{Y}^n_{I,A}).$
Similarly, $V(\bar{Y}^n_{AB,B})=V(\bar{Y}^n_{I,B}).$
Thus, $V({\bar Y}^n_{AB,A}) + V({\bar Y}^n_{AB,B}) = V({\bar Y}^n_{I,A}) + V({\bar Y}^n_{I,B})$ holds.

\end{proof}
From Theorems \ref{staticprob} and \ref{staticvariance}, both the difference and variance in the expected click values are the same for interleaving and A/B testing.
Thus, interleaving has the same efficiency as A/B testing when the examination is constant.

\subsubsection{\textbf{Case of Relevance-Aware Examination.}}
In the case of relevance-aware examination, the examination probability depends on the relevance of a ranking.
For example, the navigational click model~\cite{chuklin2015click} is a relevance-aware case in which the click action depends on the relevance, and the click action affects the examination.

We consider the error rate when the examination is relevance-aware under the following conditions.
\begin{condition}
\label{c1}
$E(R_A) > E(R_B) \land E(Y_A) > E(Y_B)$: The expected value of the relevance and the click of ranking $A$ is greater than the expected value of the relevance and the click of ranking $B$.
\end{condition}
\begin{condition}
\label{c2}
$E(O_{I,\bullet} \cdot R_\bullet) \simeq f( max[E(R_A), E(R_B)] ) E(R_\bullet)$,
where $f$ is a monotonically decreasing function.
\end{condition}

The second condition in function $f$ reflects that the user is more likely to leave the ranking after clicking on an item with high relevance.
In particular, a decreasing condition in $f$ means the user is leaving the ranking, and $max$ in $f$ means the leaving behavior is affected more by higher relevance items.
For example, users leave the ranking after interacting with the high relevance items in the navigational click model~\cite{chuklin2015click}.
We show below that interleaving is more efficient than A/B testing when these two conditions hold, based on the following two theorems.




\begin{theorem}
\label{dynamicprob}
If conditions \ref{c1} and \ref{c2} are satisfied, then $E({\bar Y}^n_{I,A}) - E({\bar Y}^n_{I,B}) > E({\bar Y}^n_{AB,A}) - E({\bar Y}^n_{AB,B})$. 
In other words, the difference in the expected click value of the interleaved comparison is greater than the A/B testing value.
\end{theorem}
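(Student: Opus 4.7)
The plan is to reduce both sides of the inequality to closed-form expressions in $E(R_A)$, $E(R_B)$, and $f$, and then show that their difference is positive by exploiting the monotonicity in Condition~\ref{c2} together with $E(R_A)>E(R_B)$ from Condition~\ref{c1}.

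First, I would rewrite both differences using Equation \eqref{eq:expectation} and the probabilistic models from Section~\ref{preliminary}. Since $S_{I,\bullet}$ and $S_{AB,\bullet}$ are independent of $O_\bullet$ and $R_\bullet$ with $E(S_{I,\bullet})=E(S_{AB,\bullet})=\tfrac{1}{2}$, we have
\[
E(\bar Y^n_{I,\bullet}) = \tfrac{1}{2}\,E(O_{I,\bullet}\cdot R_\bullet),\qquad
E(\bar Y^n_{AB,\bullet}) = \tfrac{1}{2}\,E(O_{AB,\bullet}\cdot R_\bullet).
\]
Condition~\ref{c2} applied to interleaving, combined with $\max[E(R_A),E(R_B)]=E(R_A)$ from Condition~\ref{c1}, yields $E(O_{I,\bullet}\cdot R_\bullet)=f(E(R_A))\,E(R_\bullet)$. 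For A/B testing, since only a single ranking is shown per impression, the natural single-ranking specialization of the same leaving behavior gives $E(O_{AB,\bullet}\cdot R_\bullet)=f(E(R_\bullet))\,E(R_\bullet)$; I would explicitly justify this as the $\max$ collapsing to the only relevance present.

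Second, I would subtract. The interleaving difference factors cleanly as
\[
E(\bar Y^n_{I,A})-E(\bar Y^n_{I,B}) \;=\; \tfrac{1}{2}\,f(E(R_A))\bigl(E(R_A)-E(R_B)\bigr),
\]
while the A/B difference is $\tfrac{1}{2}\bigl(f(E(R_A))E(R_A)-f(E(R_B))E(R_B)\bigr)$. Subtracting the A/B expression from the interleaving expression causes the $f(E(R_A))E(R_A)$ terms to cancel, leaving
\[
E(\bar Y^n_{I,A})-E(\bar Y^n_{I,B}) - \bigl(E(\bar Y^n_{AB,A})-E(\bar Y^n_{AB,B})\bigr) = \tfrac{1}{2}\,E(R_B)\bigl(f(E(R_B))-f(E(R_A))\bigr).
\]
By Condition~\ref{c1}, $E(R_A)>E(R_B)$, and since $f$ is monotonically decreasing by Condition~\ref{c2}, we have $f(E(R_B))>f(E(R_A))$. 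Together with $E(R_B)>0$ (which can be taken as an implicit non-degeneracy assumption, since $E(R_B)=0$ would make the whole evaluation meaningless), this quantity is strictly positive, which gives the claim.

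The main obstacle is conceptual rather than computational: Condition~\ref{c2} is stated only for interleaving, so the step requiring care is justifying the analogous expression $E(O_{AB,\bullet}\cdot R_\bullet)=f(E(R_\bullet))E(R_\bullet)$ for A/B testing. I would argue that A/B testing is the special case of the interleaved setup in which only one ranking contributes items, so the $\max$ in Condition~\ref{c2} degenerates to that ranking's own expected relevance; the remaining algebra and the sign argument via monotonicity of $f$ are then routine.
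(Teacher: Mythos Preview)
Your proof is correct and follows essentially the same approach as the paper: you interpret A/B testing as the degenerate case of interleaving a ranking with itself so that Condition~\ref{c2} collapses to $f(E(R_\bullet))E(R_\bullet)$, and then use the strict monotonicity of $f$ together with $E(R_A)>E(R_B)$ to obtain the strict inequality. The only cosmetic difference is that the paper phrases the final step as the chain $E(Y_{I,A})=E(Y_{AB,A})>E(Y_{AB,B})>E(Y_{I,B})$, whereas you subtract directly and isolate the positive term $\tfrac{1}{2}E(R_B)\bigl(f(E(R_B))-f(E(R_A))\bigr)$; the underlying ingredients are identical, and your explicit note that $E(R_B)>0$ is implicitly needed is a point the paper leaves unstated.
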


\begin{proof}
From conditions \ref{c1} and \ref{c2}, 
$E(O_{I,A} \cdot R_A) = f( max[E(R_A), E(R_B)] ) E(R_A) = f(E(R_A)) E(R_A)$.
By interpreting the A/B test as mixing the same input rankings together,  $E(O_{AB,A} \cdot R_A) = f( max[E(R_A), E(R_A)] ) E(R_A) = f(E(R_A)) E(R_A).$

Thus,
\begin{equation}
\label{dynamicequal}
E(Y_{I,A}) = E(Y_{AB,A}),
\end{equation} 
as $E(O_{I,A} \cdot R_A) = E(O_{AB,A} \cdot R_A)$
holds.
Similarly, $E(O_{AB,B} \cdot R_B) = f( E(R_B)) E(R_B)$ holds by interpreting the A/B test as mixing the same input rankings.
In addition, $E(O_{I,B} \cdot R_B) = f( max[E(R_A), E(R_B)] ) E(R_B) = f( E(R_A) ) E(R_B).$
Then, $E(O_{AB,B} \cdot R_B) > E(O_{I,B} \cdot R_B)$ holds because $f$ is a monotonically decreasing function and $E(R_A) > E(R_B)$.
Therefore,
\begin{equation}
\label{dynamicbigger}
E(Y_{AB,B}) > E(Y_{I,B}).
\end{equation}
Furthermore, 
\begin{equation}
E(Y_{AB,A}) > E(Y_{AB,B}),
\label{gthan}
\end{equation}
as $E(Y_{AB,A})=2E(Y_A)>2E(Y_B)=E(Y_{AB,B})$ holds from Equation \eqref{eq:abexpect} and condition \ref{c2}.

From equations (\ref{dynamicequal}), (\ref{dynamicbigger}), and (\ref{gthan}), $E(Y_{I,A}) = E(Y_{AB,A}) > E(Y_{AB,B}) > E(Y_{I,B})$ holds.
Using this relationship, we get $E({\bar Y}^n_{I,A}) - E({\bar Y}^n_{I,B}) > E({\bar Y}^n_{AB,A}) - E({\bar Y}^n_{AB,B}),$
as $E({\bar Y}^n_{AB,A})=E(Y_{AB,A})$, $E({\bar Y}^n_{AB,B})=E(Y_{AB,B})$, $E({\bar Y}^n_{I,A})=E(Y_{I,A})$ and $E({\bar Y}^n_{I,B})=E(Y_{I,B})$ from equation \eqref{eq:expectation}.

\end{proof}

Next, we show that the sum of the variances of interleaving is less than that of A/B testing.
\begin{theorem}
\label{dynamicvariance}
If conditions \ref{c1} and \ref{c2} hold, $V({\bar Y}^n_{AB,A}) + V({\bar Y}^n_{AB,B}) > V({\bar Y}^n_{I,A}) + V({\bar Y}^n_{I,B})$.
\end{theorem}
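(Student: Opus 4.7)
The plan is to reduce the sum-of-variances inequality to a single pointwise comparison by first eliminating the $A$-terms. Using Equation \eqref{eq:variance} together with $Y_{X,\bullet}\in\{0,1\}$, each variance takes the form $V(\bar Y^n_{X,\bullet}) = \tfrac{1}{n}\, g(E(Y_{X,\bullet}))$ where $g(p) = p(1-p)$. By Equation \eqref{dynamicequal} from the proof of Theorem~\ref{dynamicprob} we have $E(Y_{I,A}) = E(Y_{AB,A})$, so $V(\bar Y^n_{I,A}) = V(\bar Y^n_{AB,A})$. This reduces the claim to showing $V(\bar Y^n_{AB,B}) > V(\bar Y^n_{I,B})$, i.e.\ $g(E(Y_{AB,B})) > g(E(Y_{I,B}))$.

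From Equation \eqref{dynamicbigger} in the same proof, $E(Y_{AB,B}) > E(Y_{I,B})$, so the only remaining issue is the monotonicity of $g$. Since $g$ increases on $[0,\tfrac{1}{2}]$ and decreases on $[\tfrac{1}{2},1]$, I would next argue that both expectations lie in $[0,\tfrac{1}{2}]$. This uses the decomposition $Y_{AB,B} = S_{AB,B}\cdot O_{AB,B}\cdot R_B$ together with independence of $S_{AB,B}$ from the other factors (as in Equation \eqref{eq:abexpect}): since $E(S_{AB,B}) = \tfrac{1}{2}$ and $O_{AB,B}\cdot R_B \in \{0,1\}$, we get $E(Y_{AB,B}) \le \tfrac{1}{2}$. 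The identical argument applied to $Y_{I,B}$ gives $E(Y_{I,B}) \le \tfrac{1}{2}$. On $[0,\tfrac{1}{2}]$ the function $g$ is strictly increasing, so $E(Y_{AB,B}) > E(Y_{I,B})$ yields $g(E(Y_{AB,B})) > g(E(Y_{I,B}))$.

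Adding the two contributions then gives
\begin{equation*}
V(\bar Y^n_{AB,A}) + V(\bar Y^n_{AB,B}) \;=\; V(\bar Y^n_{I,A}) + V(\bar Y^n_{AB,B}) \;>\; V(\bar Y^n_{I,A}) + V(\bar Y^n_{I,B}),
\end{equation*}
which is the desired conclusion.

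The main obstacle is the non-monotonicity of $g(p) = p(1-p)$ on $[0,1]$: the inequality $E(Y_{AB,B}) > E(Y_{I,B})$ alone does not entail the corresponding inequality on variances. The critical structural step is therefore the observation that the halving factor $E(S) = \tfrac{1}{2}$ built into both the A/B-testing and interleaving probabilistic models forces all four click expectations into the interval $[0,\tfrac{1}{2}]$, where $g$ is increasing. Once this is in hand, the rest is bookkeeping.
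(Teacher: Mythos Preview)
Your proof is correct and follows essentially the same approach as the paper: both reduce the claim to $V(\bar Y^n_{AB,B}) > V(\bar Y^n_{I,B})$ via Equation~\eqref{dynamicequal}, then use the decomposition $E(Y_{\cdot,B}) = E(S_{\cdot,B})E(O_{\cdot,B}\cdot R_B)$ with $E(S_{\cdot,B})=\tfrac{1}{2}$ to place both expectations in $[0,\tfrac{1}{2}]$, where $p\mapsto p(1-p)$ is increasing, and conclude from Equation~\eqref{dynamicbigger}. The paper's argument and yours differ only in exposition, not in substance.
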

\begin{proof}
Recall that $V({\bar Y}^n_{AB,\bullet}) = \frac{1}{n} E(Y_{AB,\bullet})(1-E(Y_{AB,\bullet}))$ from Equation \eqref{eq:variance}. We note that $V({\bar Y}^n_{AB,\bullet})$ is monotonically increasing according to $E(Y_{AB,\bullet})$ if the value of $E(Y_{AB,\bullet})$ is less than or equal to $\frac{1}{2}$.
In fact, $E(Y_{AB,\bullet})$ is less than or equal to $0.5$ because $E(Y_{AB,\bullet})=E(S_{AB,\bullet} \cdot O_{AB,\bullet} \cdot R_{\bullet})=E(S_{AB,\bullet})E(O_{AB,\bullet} \cdot R_{\bullet}),$ where $E(S_{AB,\bullet})=\frac{1}{2}$ and $E(O_{AB,\bullet} \cdot R_{\bullet}) \le 1$.
The same holds for $E(Y_{I,\bullet})$.
From equation $E(Y_{I,A}) = E(Y_{AB,A})$ in (\ref{dynamicequal}), we get $V( {\bar Y}^n_{AB,A})=V({\bar Y}^n_{I,A})$.
Furthermore, from equation $E(Y_{AB,B}) > E(Y_{I,B})$ in (\ref{dynamicbigger}), we get $V({\bar Y}^n_{AB,B}) > V({\bar Y}^n_{I,B})$.
Thus,
\[
V({\bar Y}^n_{AB,A}) + V({\bar Y}^n_{AB,B}) > V({\bar Y}^n_{I,A}) + V({\bar Y}^n_{I,B}).
\]
\end{proof}

Based on Theorems \ref{dynamicprob} and \ref{dynamicvariance}, the difference in the expected click values is greater and the variance is lesser in interleaving than in A/B testing.
Thus, interleaving is more efficient than A/B testing when the examination probability depends on the relevance and when the user leaves the ranking according to the relevance.

\section{Numerical Analysis}
\label{numerical}

\begin{figure*}
\begin{tabular}{ccc}

\begin{minipage}{0.33\hsize}
\begin{center}
\includegraphics[width=40mm]{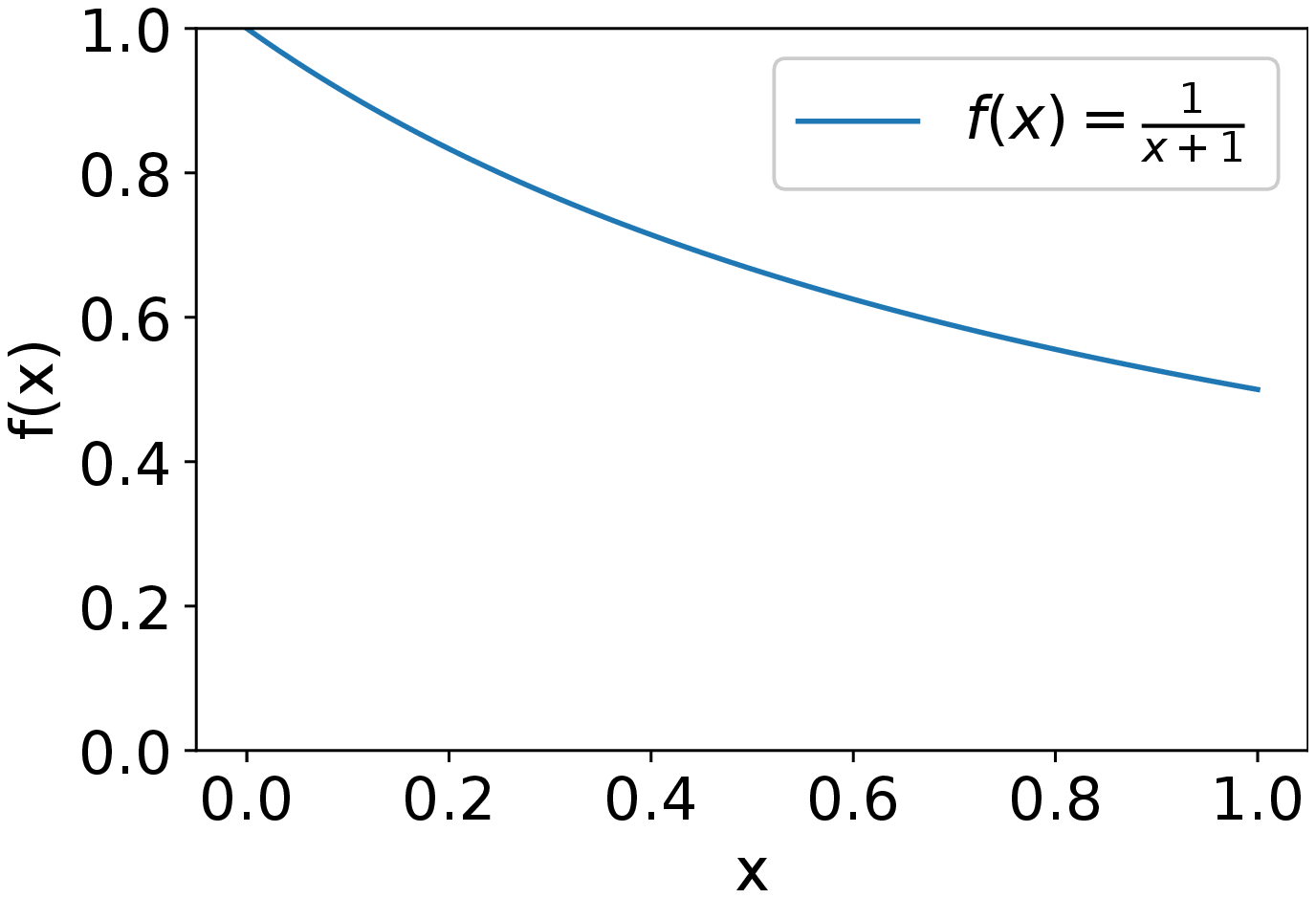}
\end{center}
\end{minipage}

\begin{minipage}{0.33\hsize}
\begin{center}
\includegraphics[width=40mm]{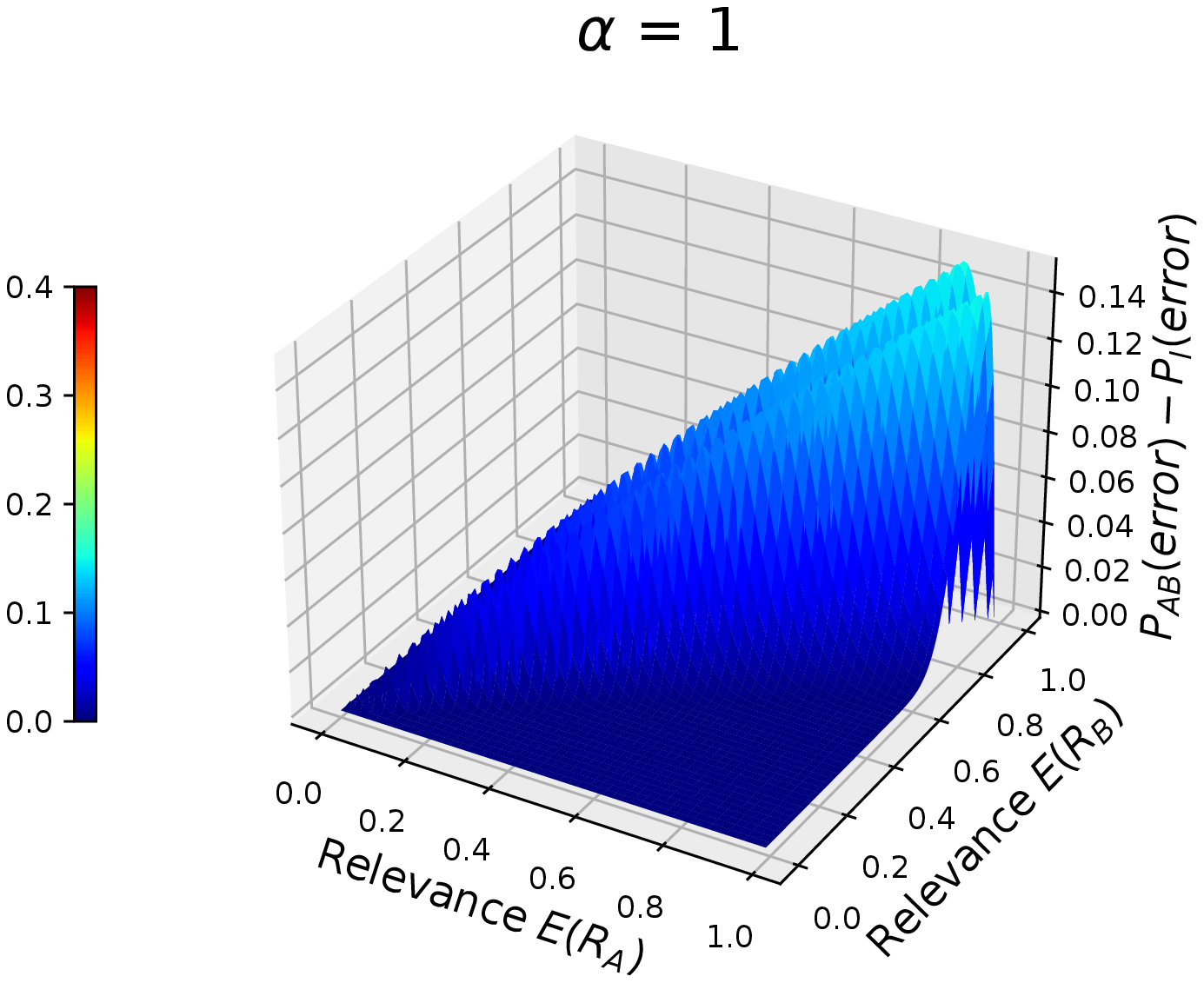}
\end{center}
\end{minipage}

\begin{minipage}{0.33\hsize}
\begin{center}
\includegraphics[width=40mm]{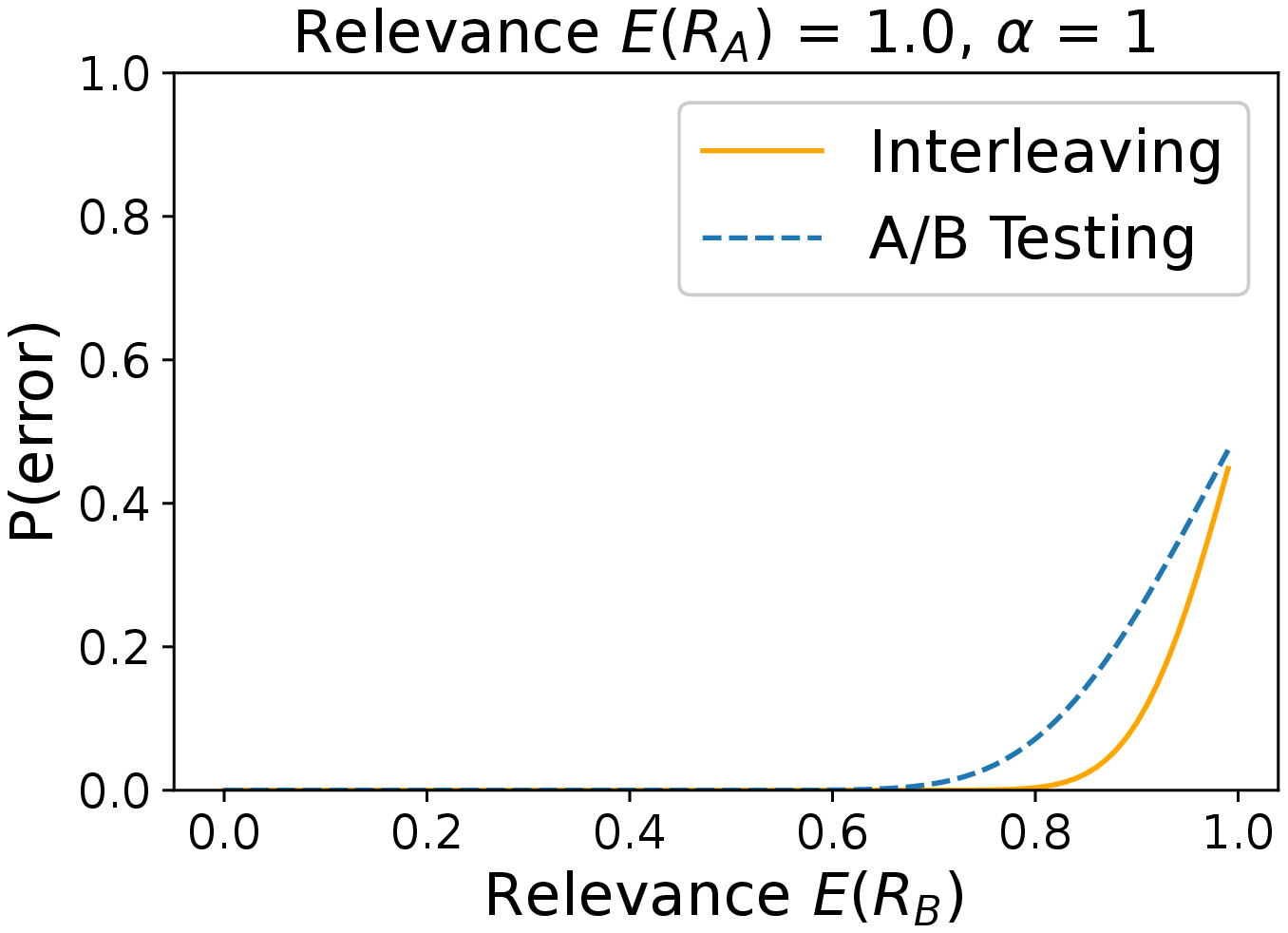}
\end{center}
\end{minipage}

\end{tabular}
\label{fig:logarticleconsumption}






\begin{tabular}{ccc}

\begin{minipage}{0.33\hsize}
\begin{center}
\includegraphics[width=40mm]{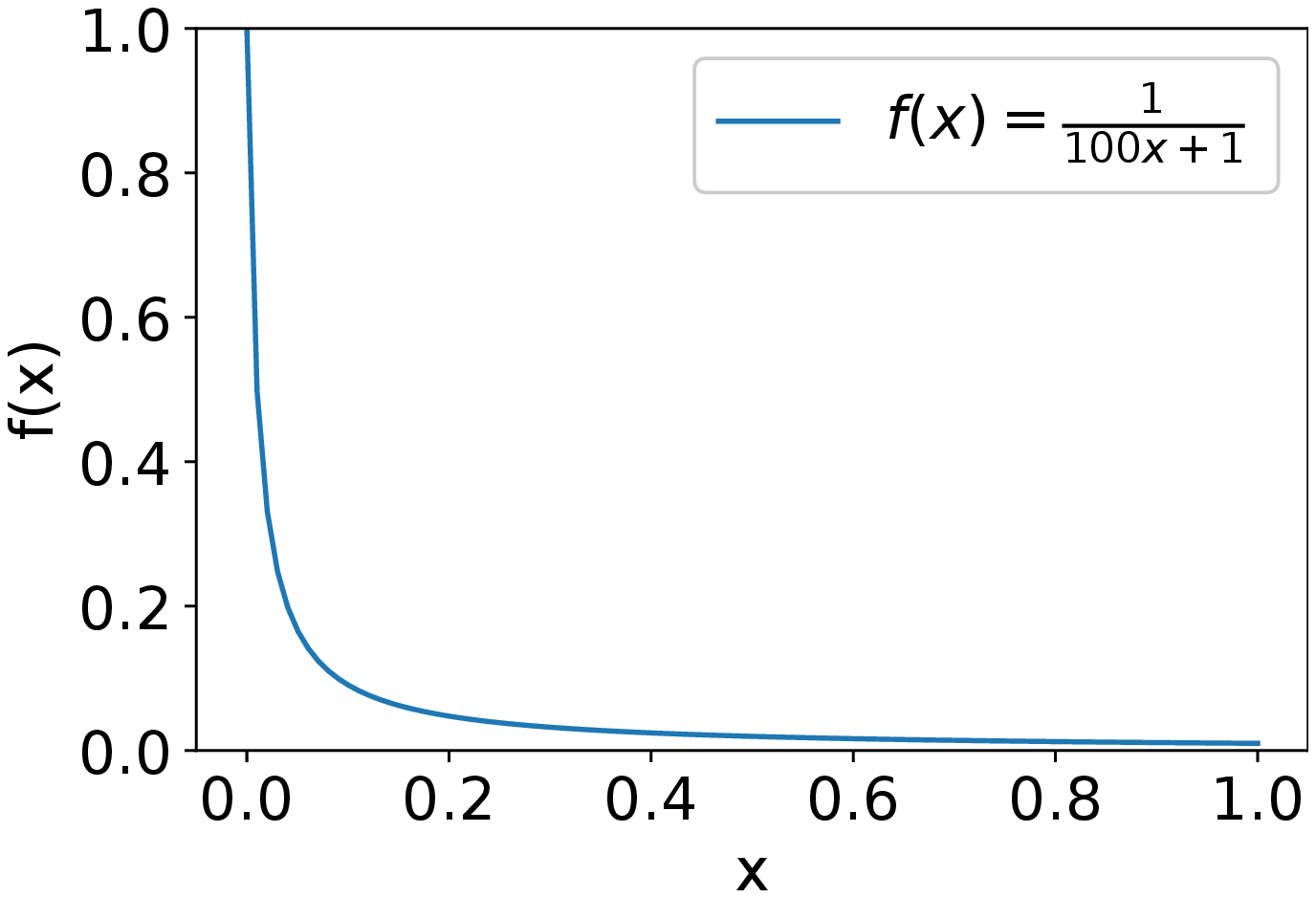}
\captionsetup{width=.9\linewidth}
\caption{Shape of the examination probability $f$}
\label{fig:f}
\end{center}
\end{minipage}

\begin{minipage}{0.33\hsize}
\begin{center}
\includegraphics[width=40mm]{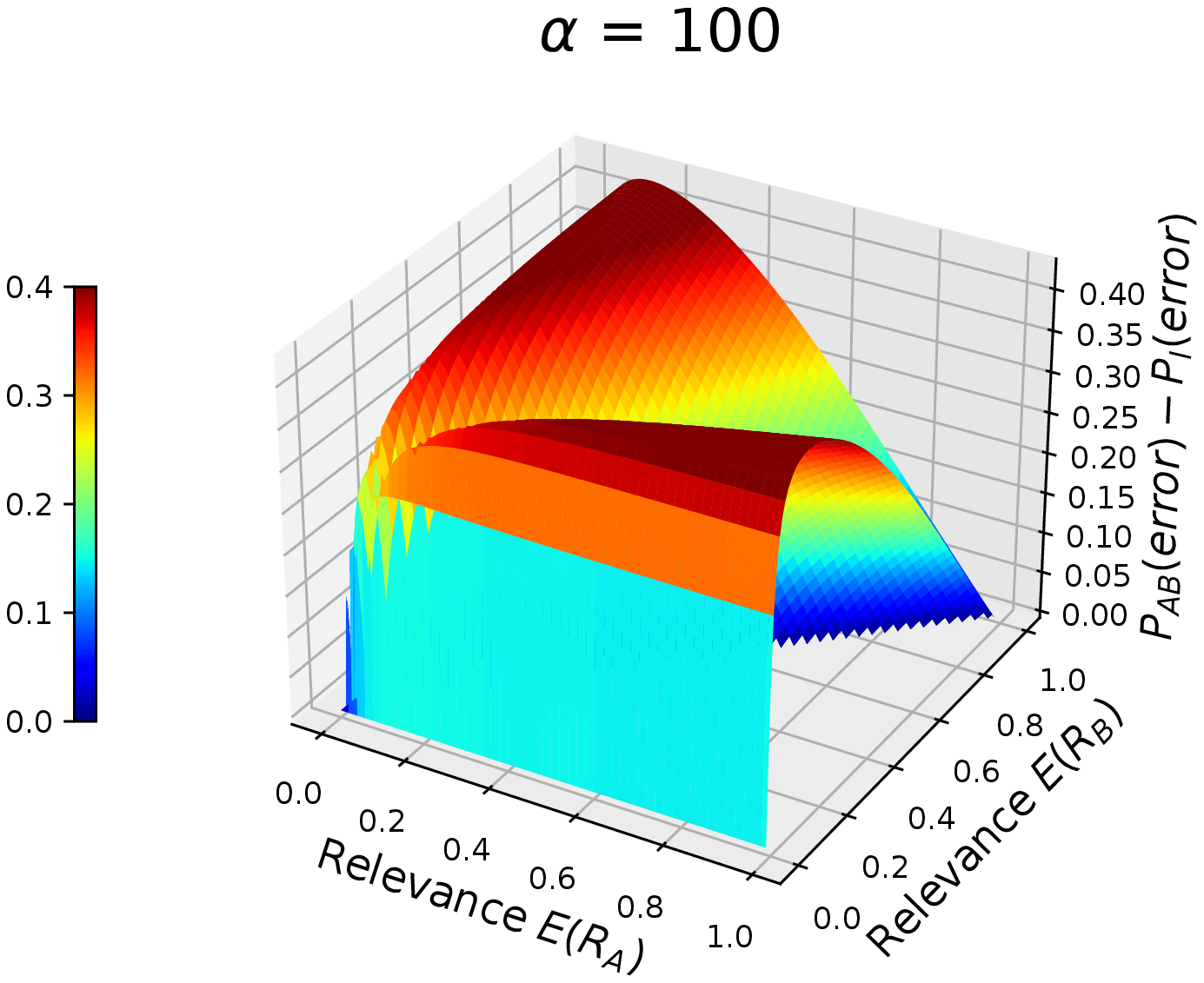}
\captionsetup{width=.9\linewidth}
\caption{Difference in the error probability between A/B testing and interleaving}
\label{fig:diff}
\end{center}
\end{minipage}

\begin{minipage}{0.33\hsize}
\begin{center}
\includegraphics[width=40mm]{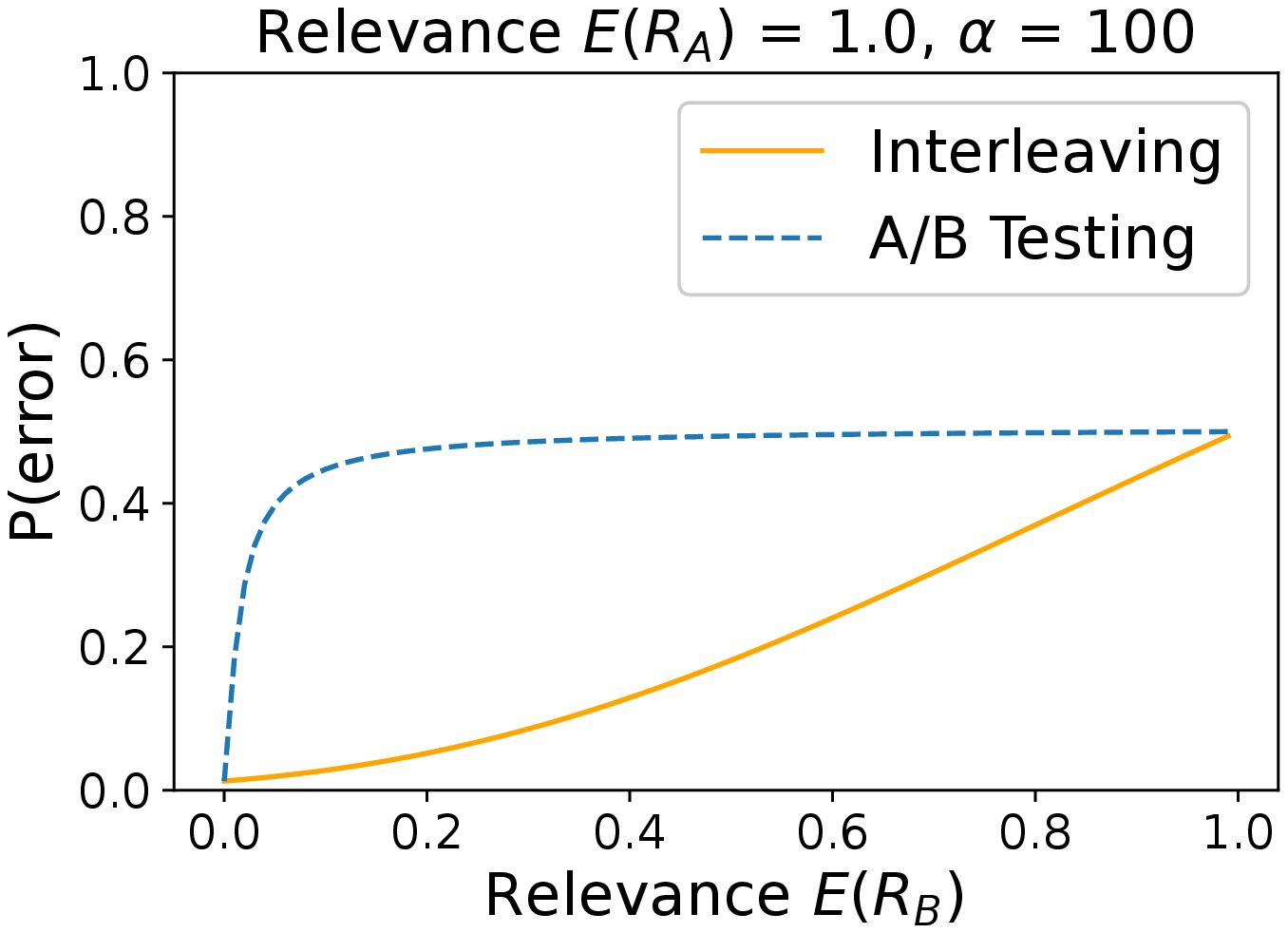}
\captionsetup{width=.9\linewidth}
\caption{Error probability of A/B testing and interleaving}
\label{fig:cut}
\end{center}
\end{minipage}

\end{tabular}
\end{figure*}

This section investigates how the examination probability function $f$ affects the error probability in A/B testing and interleaving.
We set $f(x)=\frac{1}{\alpha x + 1}$, where $\alpha \in \{1, 100\}$ controls how likely a user is to leave the ranking based on the item's relevance.
This function has the following properties that represent actual user behavior:
\begin{itemize}
    \item $f(0)=1$; users must examine the item on the ranking at position $k+1$ when the top-$k$ items are irrelevant;
    \item $f$ is a monotonically decreasing function; users leave the ranking based on its relevance.
\end{itemize}

\figref{f} shows the shape of function $f$, where the x-axis is the level of relevance and the y-axis is the examination probability.
The bottom figure with $\alpha=100$ represents cases when users are most likely to leave the ranking; the top figure with $\alpha=1$ represents cases when they are least likely to do so.

\figref{diff} shows the difference in the error probability between A/B testing and interleaving, that is, $P(Error_{AB})-P(Error_{I})$, where the x-axis is the relevance $E(R_A)$ and the y-axis is the relevance $E(R_B)$.
We observe that when $\alpha$ is larger, the $P(Error_{AB})-P(Error_{I})$ level increases.
\figref{diff} also shows that a greater difference between relevances $E(R_A)$ and $E(R_B)$ indicates a greater difference between $P(Error_{AB})$ and $P(Error_{I})$ when $\alpha=100$, whereas the lesser the difference between relevance $E(R_A)$ and $E(R_B)$, the greater the difference between $P(Error_{AB})$ and $P(Error_{I})$ at $\alpha=1$.
This result implies that interleaving is more efficient than A/B testing if the user is likely to leave the ranking based on relevance and if the difference in the relevance is large.
We further validate these results using user simulations in Section~\ref{experiment}.

\figref{cut} shows the error probability of A/B testing and interleaving, where the relevance $E(R_A)$ is fixed at $1.0$, the x-axis is the relevance $E(R_B)$, and the y-axis is the error probability.
When $\alpha=100$, the error probability of A/B testing is around $0.5$ even if $E(R_B)$ is at $0.2$, whereas the error probability of interleaving is around $0.0$.
\figref{cut} implies that evaluating the difference in the expected click value using A/B testing is difficult if the user is likely to leave the ranking based on a small relevance level.
In contrast, the interleaving method more stably evaluates the difference in the expected click value. 

\section{User Simulation}
\label{experiment}

\begin{table}[t!]
\centering
\caption{Click Models}
\begin{tabular}{l|ccc|ccc} \hline \hline
& \multicolumn{3}{c|}{$P(click = 1 | R)$} & \multicolumn{3}{|c}{$P(stop = 1 | R)$} \\ \hline
$R$ &0&~~1&2&0&~~1&2 \\ \hline
Perfect&0.0&~~0.5&1.0&0.0&~~0.0&0.0 \\
Navigational&0.0&~~0.5&1.0&0.0&~~0.5&1.0 \\
\hline
\end{tabular}
\label{tab:cascade}
\end{table}

In this section, we present the results of our user simulations to answer the following research questions (RQs):
\begin{itemize}
    \item {\bf RQ1:} How does the user click model affect the efficiency of A/B testing and interleaving?
    \item {\bf RQ2:} How does the variability in the relevance of the input rankings affect the error rate?
\end{itemize}

\subsection{Datasets}
We use multiple datasets previously adopted in interleaving research~\cite{oosterhuis2017ppm}.
Most of the datasets are TREC web tracks from $2003$ to $2008$ ~\cite{clarke2009overview,qin2016letor,voorhees2003overview}.
HP2003, HP2004, NP2003, NP2004, TD2003, and TD2004 each have $50$--$150$ queries and $1,000$ items.
Meanwhile, the OHSUMED dataset is based on the query logs of the search engine MEDLINE, an online medical information database, and contains $106$ queries.
MQ2007 and MQ2008 are TREC's million query track ~\cite{allan2007million} datasets, which consist of $1,700$ and $800$ queries, respectively.
The relevance labels are divided into three levels: irrelevant (0), relevant (1), and highly relevant (2).

We generate the input rankings $A$ and $B$ by sorting the items with the features used in past interleaving experiments~\cite{oosterhuis2017ppm,schuth2014multileaved}.
We use the BM25, TF.IDF, TF, IDF, and LMIR.JM features for MQ2007.
For the other datasets, we use the BM25, TF.IDF, LMIR.JM, and hyperlink features.
Note that each feature's value is included in the dataset.
We then generate a pair of input rankings $A$ and $B$ with $|I|=5$ for all pairs of features.
The source code of the ranking generations and user simulations are available in a public GitHub repository.\footnote[1]{https://github.com/mpkato/interleaving}

\subsection{User Behavior}
User behavior is simulated in three steps.
First, the ranking is displayed after the user issues a query.
Next, the user decides whether to click on the items in the ranking.
If the user clicks on an item, they will leave the ranking according to its relevance label.
The details of this user behavior are as follows.

\subsubsection{Ranking Impressions.}
First, the user issues a pseudo-query by uniformly sampling queries from the dataset.
The interleaving model generates an interleaved ranking and displays the ranking to the user.
The IMA was used as the interleaving method in this experiment.
In each ranking impression, up to five items are shown to the user.
After this ranking impression, the user simulates a click using the click model.

\subsubsection{Click Model.}
The cascade click model is used to simulate clicks.
\tabref{cascade} presents the two types of the cascade click model, where $P(click = 1 | R)$ represents how the user clicks the item according to the relevance $R$, and $P(stop = 1 | R)$ represents how the user leaves the ranking according to the relevance after the click.
The perfect click model examines all items in the ranking and clicks on all relevant items.
The navigational click model simulates a user looking for a single relevant item.

\subsection{Results}

\begin{figure}[t]

\begin{minipage}[c]{0.5\hsize}
\centering
\includegraphics[scale=0.3]{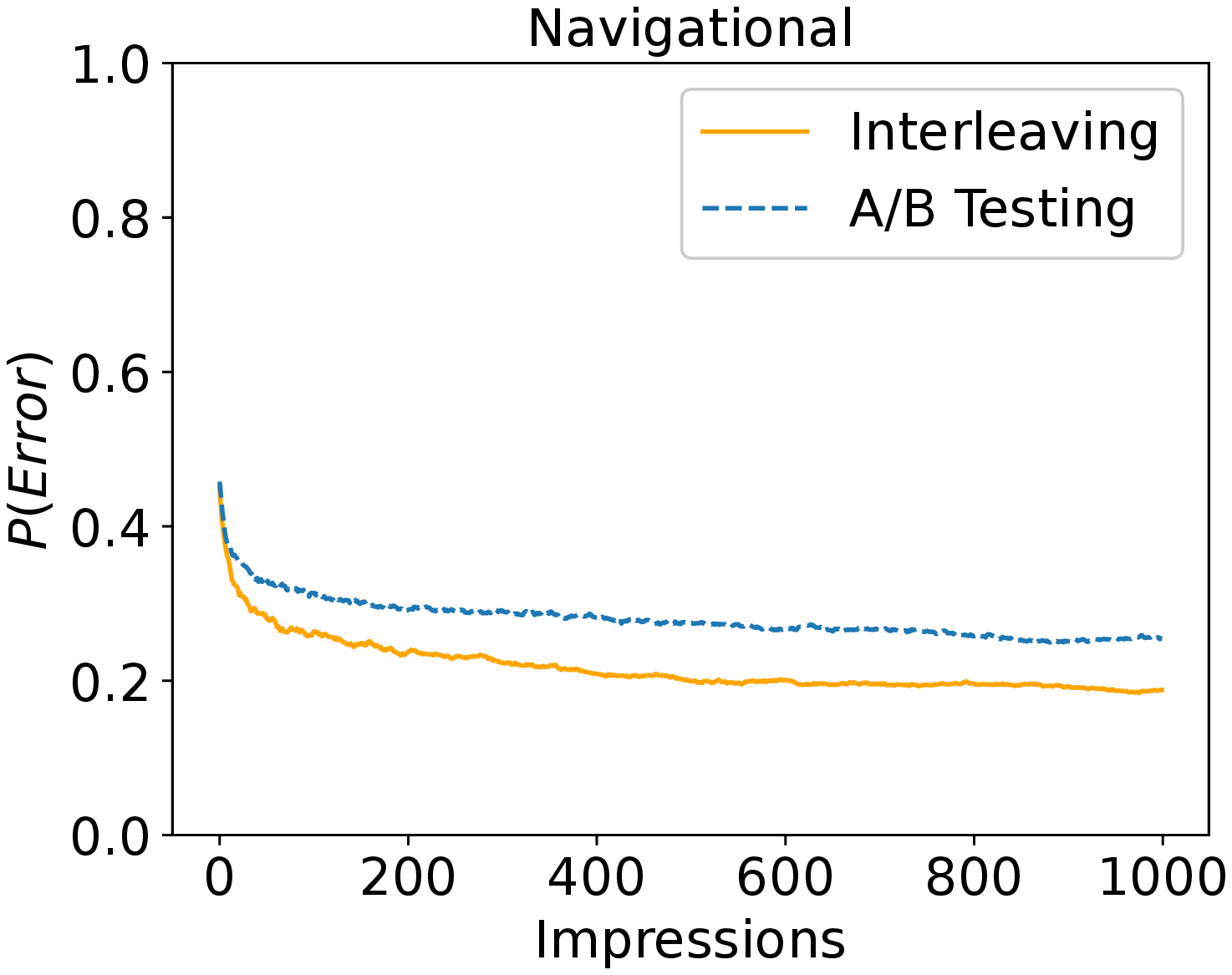}
\end{minipage}
\begin{minipage}[c]{0.5\hsize}
\centering
\includegraphics[scale=0.3]{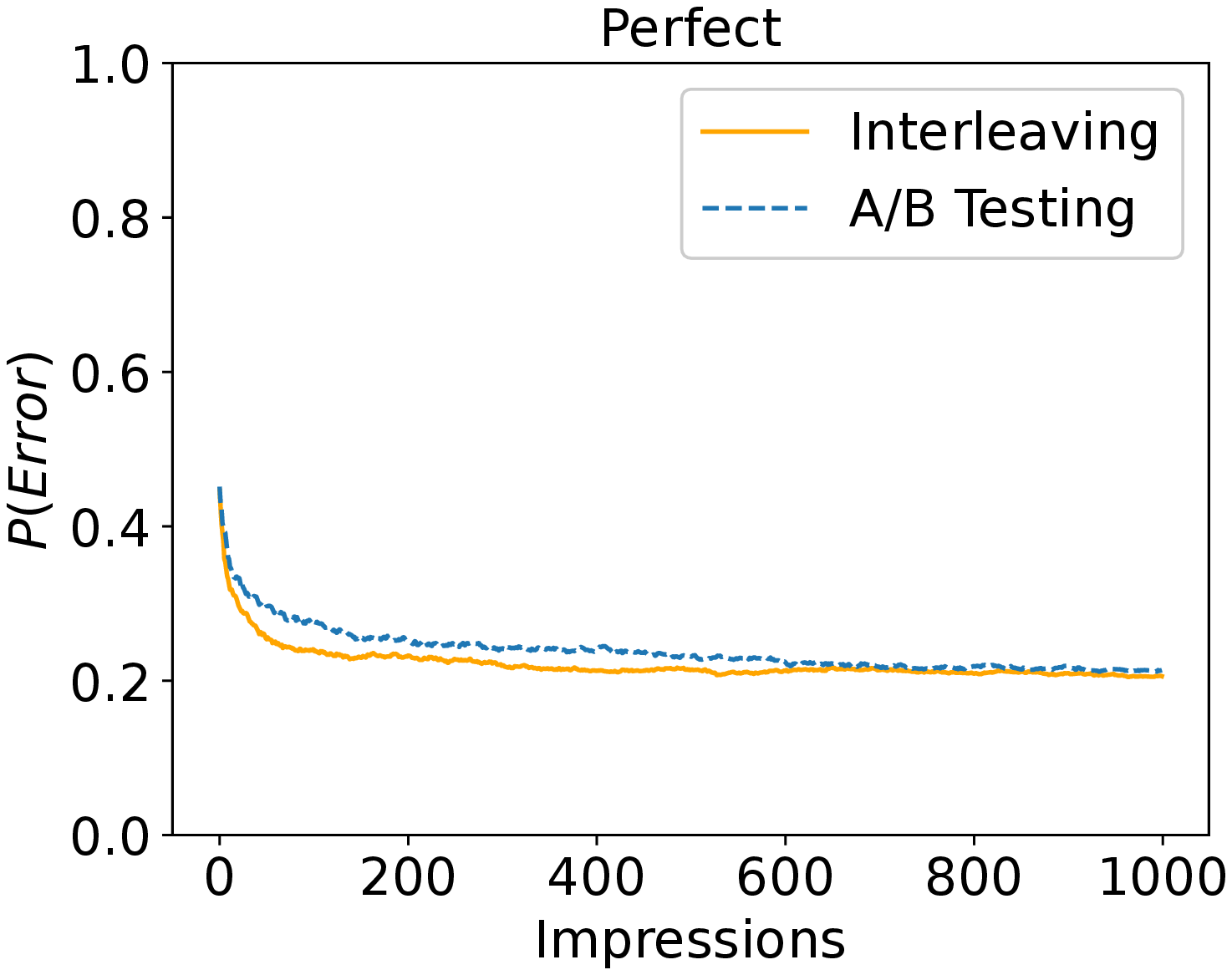}
\end{minipage}
\caption{Efficiency over impressions.}
\label{fig:efficiency}

\begin{minipage}[c]{0.5\hsize}
\centering
\includegraphics[scale=0.3]{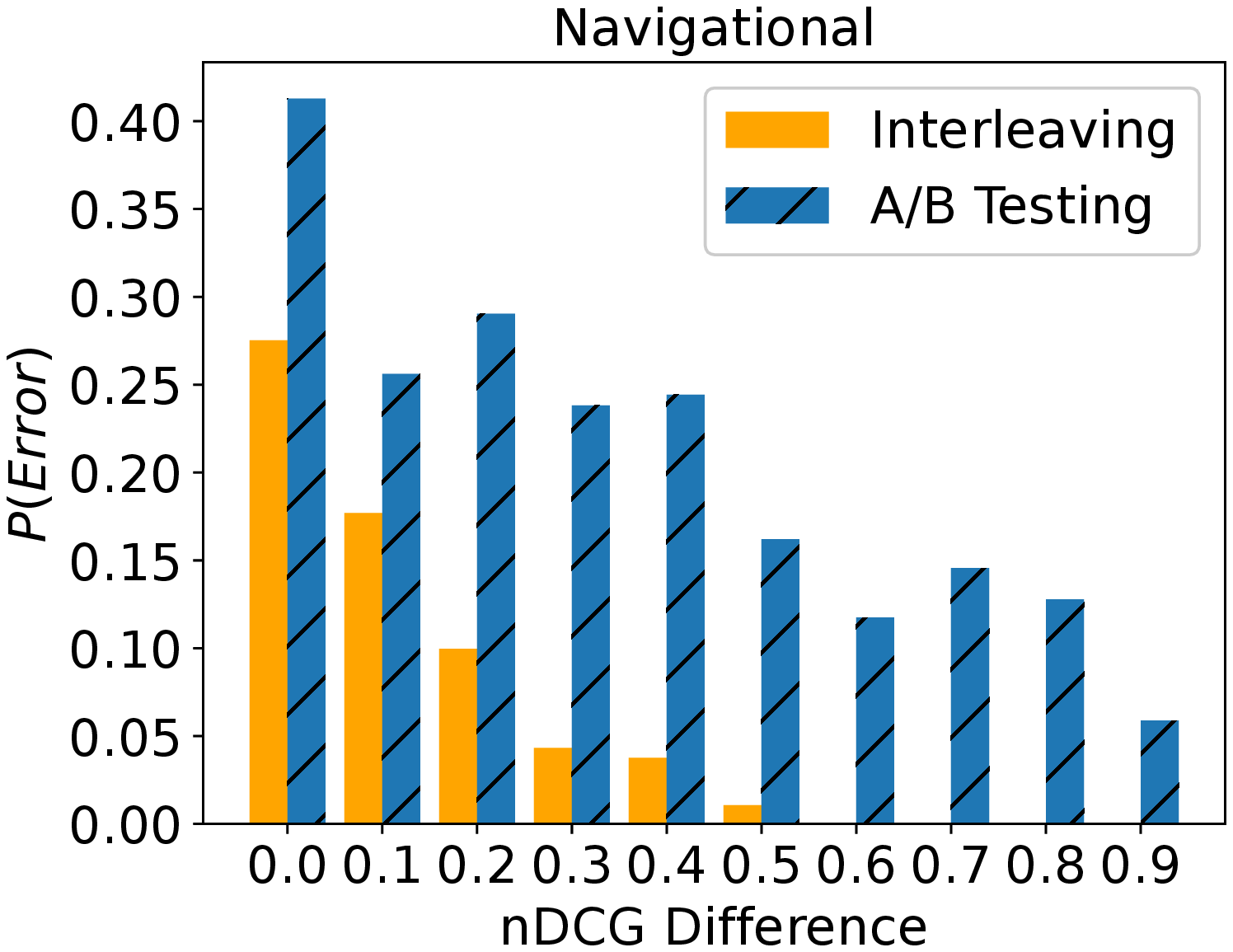}
\end{minipage}
\begin{minipage}[c]{0.5\hsize}
\centering
\includegraphics[scale=0.3]{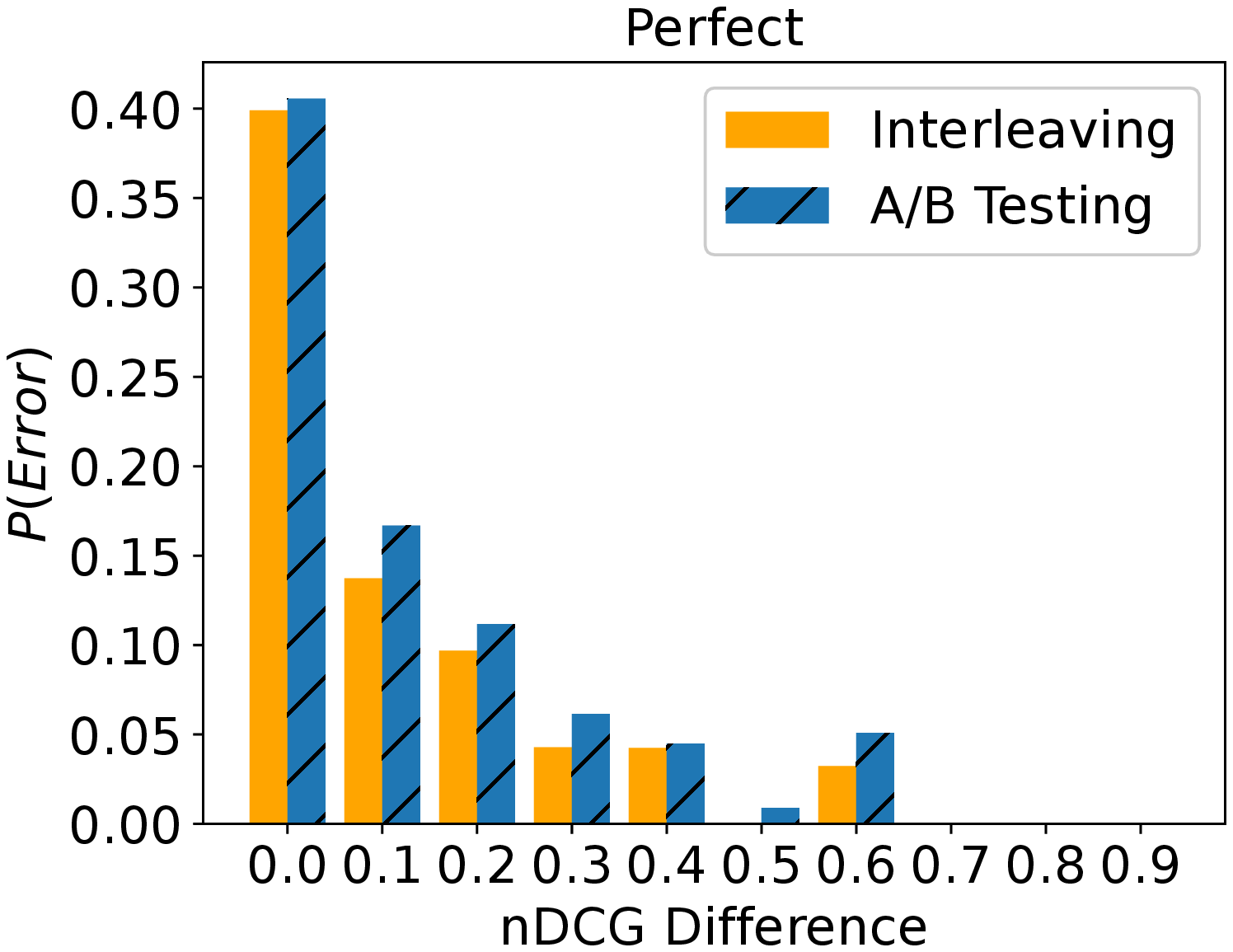}
\end{minipage}
\caption{Error rate for the nDCG difference between rankings.}
\label{fig:ndcgdiff}
 
\end{figure}

\subsubsection{RQ1: How does the user click model affect the efficiency of A/B testing and interleaving?}

\figref{efficiency} presents the error rate over impressions, that is, the efficiency of each click model.
The x-axis represents the number of ranking impressions, and the y-axis represents the error rate.
The values in \figref{efficiency} show the average values.
The maximum number of impressions is $1,000$, and each impression corresponds to a randomly selected query, i.e., one query has one impression.
This evaluation procedure is repeated $10$ times for each dataset.

The results show that interleaving is more efficient and has fewer evaluation errors than A/B testing.
The results are consistent with those of the efficiency analysis, as explained in the numerical experiment showing that interleaving is more efficient than A/B testing if the user is likely to leave the ranking after clicking on the item with the highest relevance.
Thus, the answer to RQ1 is that interleaving is efficient for the navigational click model, where users leave a ranking according to relevance.

\subsubsection{RQ2: How does the variability in the relevance of the input rankings affect the error rate?}

\figref{ndcgdiff} presents the error rate for the nDCG difference between each pair of ranking $A$ and $B$.
The x-axis represents the nDCG difference, and the y-axis represents the error rate.
The values in \figref{ndcgdiff} show the average values for all queries.
We randomly select $1,000$ queries that allow query duplication for each dataset.
The number of impressions is $1,000$ for each query, i.e., one query has $1,000$ impressions, and the evaluation error is calculated after $1,000$ impressions for each query.
This evaluation procedure is repeated $10$ times for each dataset.

The navigational click model shows a lower error rate for interleaving than for A/B testing.
The greater difference between the relevances of ranking $A$ and $B$ corresponds to the greater difference in error rates between interleaving and A/B testing.
These results are consistent with those of our previously illustrated numerical experiment, which shows that interleaving has a lower error rate than A/B testing when the difference in the relevance is large and the user is likely to leave the ranking with relevance,
whereas the perfect click model shows a small or equal difference in the error rate.
From the above analysis, the answer to RQ2 is that based on the navigational click model, interleaving has a low error rate when the ranking pair has a large difference in relevance.


\section{Conclusion}
\label{conclusion}
This study presented a theoretical analysis on the efficiency of interleaving, an online evaluation method, in the information retrieval field.
We first designed a simple interleaving method similar to other interleaving methods, our analysis of which showed that interleaving is more efficient than A/B testing when users leave a ranking according to the item's relevance.
The experiments verified the efficiency of interleaving, and the results according to the theoretical analysis were consistent with the empirical results.

Our theoretical analyses are limited in that they assume no duplication items exist in the $A$ and $B$ rankings.
Item duplication might further contribute to efficiency in some cases beyond our analysis.
However, we believe our basic analysis could be a first step toward discussing the efficiency of more general interleaving settings.
The next challenge includes investigating what other examination probability functions satisfy the efficiency of interleaved or multileaved comparisons.

%
%
%
\bibliographystyle{splncs04}
\bibliography{ref}
%




\end{document}